\newtheorem{lem}{Lemma}
\newtheorem{defin}{Definition}
\newtheorem{theorem}{Theorem}
\newtheorem{prop}{Proposition}
\newtheorem{assum}{Assumption}
\crefname{assum}{assum}{assumptions}
\newcommand{\hs}{& \hspace{-2mm}}
\newcommand{\tm}{\theta_{\rm m}}
\newcommand{\htm}{\theta_{\rm m}}
\newcommand{\tb}{\theta_{\rm b}}
\newcommand{\htb}{\theta_{\rm b}}
\newcommand{\ssk}{s^{\rm s}_k}
\newcommand{\srk}{s^{\rm r}_k}
\newcommand{\sszk}{s^{\rm s}_{0:k}}
\newcommand{\sigszk}{\sigma^{\rm s}_{0:k}}
\newcommand{\cSsk}{\mathcal{S}^{\rm s}_k}
\newcommand{\cSrk}{\mathcal{S}^{\rm r}_k}
\newcommand{\iTheta}{{\it \Theta}}
\newcommand{\hiTheta}{{\it \Theta}}
\newcommand{\isk}{i^{\rm s}_k}
\newcommand{\irk}{i^{\rm r}_k}
\newcommand{\Isk}{I^{\rm s}_k}
\newcommand{\Irk}{I^{\rm r}_k}
\newcommand{\cIsk}{\mathcal{I}^{\rm s}_k}
\newcommand{\cIrk}{\mathcal{I}^{\rm r}_k}
\newcommand{\EUsk}{\bar{\mathbf{U}}^{\rm s}_k}
\newcommand{\EUrk}{\bar{\mathbf{U}}^{\rm r}_k}
\newcommand{\BRsk}{{\rm BR}^{\rm s}_k}
\newcommand{\BRrk}{{\rm BR}^{\rm r}_k}
\newcommand{\pia}{\hat{\pi}^{\ast}_{\rm m}}
\renewcommand{\QED}{\QEDopen}
\DeclareMathOperator*{\argmax}{arg\,max}
\title{\LARGE \bf
 Asymptotic Security of Control Systems by Covert Reaction:\\ Repeated Signaling Game with Undisclosed Belief
 %
}
\author{Hampei Sasahara$^{1}$, Serkan Sar{\i}ta\c{s}$^{2}$, and Henrik Sandberg$^{1}$
\thanks{$^{1}$ Hampei Sasahara and Henrik Sandberg are with the Division of Decision and Control Systems,
	KTH Royal Institute of Technology,
	Stockholm, Sweden, {\tt\small \{hampei,hsan\}@kth.se}.}
\thanks{$^{2}$ Serkan Sar{\i}ta\c{s} is with the Division of Information Science and Engineering,
	KTH Royal Institute of Technology,
	Stockholm, Sweden, {\tt\small saritas@kth.se}.
	}
        }
\begin{document}

\maketitle
\thispagestyle{empty}
\pagestyle{empty}

\begin{abstract}

This study investigates the relationship between resilience of control systems to attacks and the information available to malicious attackers.
Specifically, it is shown that control systems are guaranteed to be secure in an asymptotic manner by rendering reactions against potentially harmful actions covert.
The behaviors of the attacker and the defender are analyzed through a repeated signaling game with an undisclosed belief under covert reactions.
In the typical setting of signaling games, reactions conducted by the defender are supposed to be public information and the measurability enables the attacker to accurately trace transitions of the defender's belief on existence of a malicious attacker.
In contrast, the belief in the game considered in this paper is undisclosed and hence common equilibrium concepts can no longer be employed for the analysis.
To surmount this difficulty, a novel framework for decision of reasonable strategies of the players in the game is introduced.
Based on the presented framework, it is revealed that any reasonable strategy chosen by a rational malicious attacker converges to the benign behavior as long as the reactions performed by the defender are unobservable to the attacker.
The result provides an explicit relationship between resilience and information, which indicates the importance of covertness of reactions for designing secure control systems.

\end{abstract}

\section{Introduction}
\label{sec:intro}

Modern cyber-physical systems are facing malicious actions by sophisticated adversaries because of increasing interconnectivity of the cyber and physical layers.
In fact, various incidents on cyber attacks on physical systems have been reported, which cover a broad range of applications such as an attack on an uranium enrichment plant, attacks on Ukrainian power plants, remote car hacking, and unmanned aerial vehicle hacking~\cite{Kushner2013Real,Lee2016Analysis,Miller2015Remote,Daniel2012Drone}.
Thus, analysis and synthesis of resilient cyber-physical systems have become a crucial topic in our community.

There are numerous studies discussing resilience of cyber-physical systems based on control theory, in which several definitions of resilience have been proposed thus far; see~\cite{Teixeira2015Secure,Mo2016On,Milosevic2017Analysis,Bai2017Data,Umsonst2017Security,Kung2017The,Chen2018Optimal,Guo2018Worst,Milosevic2019Estimating}.
In the existing studies, the control system to be defended is supposed to be protected with an attack-detection scheme.
Accordingly, the behavioral principle of malicious attackers is given as maximizing impacts on behavior of the control system through attacks without being detected.
In this framework, behaviors of the system, the attacker, and the defender after an alarm are not concerned.
An implicit premise is that effects of malicious attacks can immediately be eliminated or an excessive reaction, e.g., terminating the entire system, is performed once an alarm rings.


In actual systems, however, the post-alarm process should carefully be analyzed because root-cause analysis of anomalies is difficult and also suspension of whole parts of the system is undesirable due to false alarms and costs.
An approach to cope with attacks more flexibly is to choose a proper reaction against potentially executed harmful actions depending on the level of confidence on existence of a malicious attacker.
An example can be provided for networked systems, in which disconnecting some components that are presumed to be attacked is more acceptable than simply terminating the entire network when the operator's confidence is not firm.
The technical problem here is that, under this scenario sophisticated attackers choose their actions according to their measurement on the state of the system as well as defender's reactions.
Thus we can no longer employ the existing frameworks for resilience assessment because it is now required to handle interactions between attacker's actions and defender's reactions unlike the case where the attacker simply avoids being detected.

This study investigates the relationship between \emph{resilience} of control systems and \emph{information} available to malicious attackers under the setting above motivated by the following consideration.
Suppose that a malicious attacker can fully observe reactions executed by the defender.
When observing aggressive reactions, the attacker recognizes that the confidence on existence of a malicious attacker is fairly firm and then a reasonable behavior of the attacker is to stay calm.
Conversely, if a non-aggressive reaction is observed, the attacker is allowed to execute an offensive action.
This discussion leads to the expectation that leak of information on the reactions implemented by the defender causes serious damage to the system, and conversely, also that resilience is enhanced by rendering reactions covert.
A naturally arising question here is how, and how much, the resilience is strengthened by covertness of reactions.

In order to give an answer to the question, we analyze the situation using game theory.
In particular, we formulate behaviors of the decision makers as a repeated signaling game~\cite{Kaya2009Repeated}.
In a signaling game, there are two players, one of which, referred to as a sender, has private information called \emph{type} and chooses her action based on her preference depending on her type~\cite{Crawford1982Strategic,Kreps1994Chapter}.
The action has a role as signaling to the other player, referred to as a receiver, who forms a belief on the sender's type based on received signals and chooses a reasonable reaction under the belief.
A widely employed solution concept is perfect Bayesian Nash equilibria, on which both player's strategies are best responses to each other and the belief of the receiver is updated based on Bayes' rule.
Typically, the reasonable actions of the players are supposed to be given as one of the equilibria on the premise that the prior belief and its transition are public information.
In our case, however, the exact value of the belief is undisclosed to the attacker owing to covertness of reactions and hence her reasonable strategies cannot be determined using the usual concept.
Because the accurate value of the belief is unknown, the attacker has to estimate the belief, i.e., ``belief of belief'' is required for determining the attacker's reasonable actions.
To handle this situation, we introduce a novel framework, under which the estimated belief is specifically given as the conditional expected value of the true belief.
Based on the framework, we can determine well-posed reasonable attacker's actions.

Our analysis under the presented framework reveals that \emph{asymptotic security} is achieved by covert reaction for general control systems.
Specifically, it is shown that any reasonable strategy chosen by a rational malicious attacker converges to the benign behavior as time proceeds.
The finding is based on the fact that the estimated belief from the attacker is a monotonically non-decreasing function whatever strategy the attacker chooses.
From the monotonicity, the estimated belief is eventually increased as the game proceeds.
When the belief is sufficiently large the reasonable action is restricted to the most unsuspicious behavior, which is given by the benign behavior.
The obtained result gives an explicit relationship between resilience and information and suggests that covertness of reactions is an essential requirement for designing secure control systems.

Existing literature considering control systems under Bayesian games where some players do not have exact values of other players' beliefs can be found~\cite{Gupta2014Common,Nayyar2014Common,Vasal2019A,Heydaribeni2019System}.
Typically, the situation emerges with \emph{asymmetric information}, i.e., some information of the game is assumed to be private.
The approach for handling the situation in~\cite{Gupta2014Common,Nayyar2014Common,Vasal2019A} is based on the idea that the belief of all players are constructed only with common information.
It is shown in~\cite{Heydaribeni2019System} that if the game is a linear quadratic game then the true belief is constructed only with each player's local information.
Analysis of system security based on signaling game is studied in~\cite{Pawlick2019Modeling}, the modeling in which is similar to ours.
The main difference compared to this study is that the belief is assumed to be known to both players and a single stage game is considered in~\cite{Pawlick2019Modeling}.
In economics literature, games without knowledge of actions performed by players are categorized into a Bayesian game with imperfect information, in which the solution concept is given as sequential equilibria~\cite{Kreps1982Sequential,Fudenberg1991Perfect,Cole2001Dynamic}.
Our solution concept differs from sequential equilibria from the perspective that our concept is given with the sender's belief about the receiver's belief while sequential equilibria are given with the sender's belief about all variables in the game on each information set.
Furthermore, the notion of hierarchies of beliefs is proposed for treating the ``belief of belief'' concept in the general setup~\cite{Mertens1985Formulation,Borgers2005Introduction}.

The organization of this paper is as follows.
In Section~\ref{sec:sys_game}, we first show a motivating example for our analysis.
Then we give the model of the system to be defended and behaviors of a malicious attacker and the defender based on a repeated signaling game formulation.
Section~\ref{sec:ana} provides an analysis of the formulated game and reveals that asymptotic security is achieved by covert reaction.
The theoretical findings are verified through a numerical example in Section~\ref{sec:num} and Section~\ref{sec:conc} draws conclusion.

\subsection*{Notation}
Let $\mathbb{Z}_+$ and $\mathbb{R}$ denote the sets of nonnegative integers and real numbers, respectively.
Let $\mathfrak{P}(\mathcal{X})$ denote the power set of a set $\mathcal{X}$.
The probability space considered in this paper is denoted by $(\Omega,\mathfrak{F},{\rm Prob})$.
We use upper case letters for random variables and lower case letters for their realizations.
The expected value of $X$ is denoted by $\mathbb{E}[X]$.
For a random variable $X:\Omega\to \mathcal{X}$, we use $X\in \mathcal{X}$ for representing its codomain.
For discrete random variables $X$ and $Y$, let $p(x)$ and $p(y|x)$ denote the probability mass function of $X$ and the conditional probability mass function of $Y$ given $X=x$, respectively.
Similarly, $\mathbb{E}[Y|x]$ denotes the conditional expected value of $Y$ given $X=x$.
Let $x_{0:k}$ denote a tuple $(x_0,\ldots,x_k)$ of the sequence $x_0,\ldots,x_k$.

\section{Control System with Covert Reaction}
\label{sec:sys_game}

\subsection{Motivating Example}
\label{subsec:mot}

Consider the networked system shown in Fig.~\ref{fig:ex}.
The upstream process and the downstream process are interconnected through a network over an overt channel.
The overt channel is possibly under man-in-the-middle attack where the adversary can receive and alter the signal on the channel.
To cope with the attack, the system operator equips a redundant secure channel, called covert channel~\cite{Lampson1973A}, through which secret information can be transferred being unrecognized by other third parties.
A covert channel is typically regarded as a vulnerability of communication~\cite{Zander2007A} but can proactively be used for enhancing reliability~\cite{Kaur2016Covert}.
Although the subsystems can securely communicate through the covert channel, the overt channel is preferable to use if no attacks are executed since communication over the covert channel requires an unusual protocol and would cause a high cost or performance degradation.
Moreover, the operator utilizes the model-based anomaly detection scheme~\cite{Ding2013Model} with which the defender can observe the residual obtained as the difference between the actual signal through the overt channel and the ideal signal calculated through a model of the system.
Basically, the residual is not always zero owing to noise but is small without attacks and hence large residual encourages the operator to use the covert channel.
The objective of the operator is to achieve an efficient operation by properly choosing the channel used for the communication at each time step based on the measured residual.
Conversely, the objective of the adversary is to disturb the downstream process by choosing proper signals on the overt channel.

\begin{figure}[t]
\centering
\includegraphics[width = .95\linewidth]{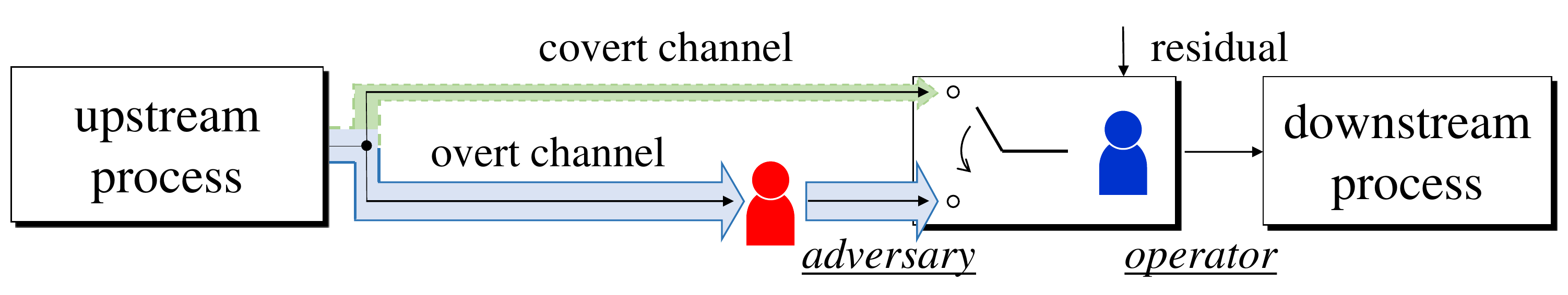}
\caption{Example: A networked system with an overt channel under man-in-the-middle attack and a secure covert channel that would cause a costly communication or performance degradation.}
\label{fig:ex}
\end{figure}

We now explain the idea of  ``covert reaction'' with this example.
Let us suppose that no data is transferred through the overt channel when the covert channel is used.
Then the adversary can recognize the channel used for each communication, which corresponds to the case of \emph{overt reaction.}
If usage of the overt channel is observed, which indicates that presence of the attacker is not suspected, then large disturbance can be injected.
If no data through the over channel is observed, the attacker tries to lurk by stopping injection of attacks.
Hence, in the case of overt reaction, the adversary can choose a reasonable action according to the observed scenario.
A possible approach that inhibits this strategy is to transfer data through both the channels when the covert channel is used for actual communication.
Then, because the overt channel is used in both cases, the true scenario cannot be identified from the attacker's perspective, which corresponds to the case of \emph{covert reaction.}
Thus, in the case of covert reaction, the adversary has to take an alternative strategy, under which the damage on the system would be less harmful.
From this discussion, we expect that covert reaction can enhance resilience of control systems.

This example motivates us to thoroughly analyze resilience of control systems with covert reaction.
In the next subsection, we provide a model of the control system to be defended for mathematical analysis.

\subsection{System Description}
\label{subsec:sys}

Consider a control system being possibly under attack shown in Fig.~\ref{fig:sys}.
Let $U_k \in \mathcal{U}$ and $X_k \in \mathcal{X}$ denote the input signal and the state of the system at the time step $k \in \mathbb{Z}_+$, respectively.
The input signals are assumed to be mutually independent, i.e., $U_i$ and $U_j$ are independent for any $i,j\in \mathbb{Z}_+$.
There is a party, called a \emph{sender}, who can alter the behavior of the system through an \emph{action} $a_k \in \mathcal{A}$.
Let $\theta \in \iTheta=\{\tb,\tm\}$ denote the \emph{type} of the sender,
where the symbols $\tb$ and $\tm$ imply that the sender is benign or malicious, respectively.
The case $\theta=\tb$ corresponds to the control system under normal operation where effects of the action $a_k$ can be ignored, e.g., $a_k=0$ if $\mathcal{A}$ is the Euclidean space.
We denote the map of the system by $\Sigma_k: \mathcal{U}^{k+1} \times \mathcal{A} \to \mathcal{X}$.
Then the $k$th state is determined according to the relationship
$x_k = \Sigma_k(u_{0:k},a_k)$
with given input signals and an action where effects of the initial value of the state are ignored because the analysis in this paper is independent of the initial value.
Note that we assume that the state is affected only by the instantaneous action.
This property is favorable to the attacker in terms of avoidance of detection because there are no lingering effects of an attack.
In this sense, we consider a severe situation for defending the control system.

\begin{figure}[t]
\centering
\includegraphics[width = .90\linewidth]{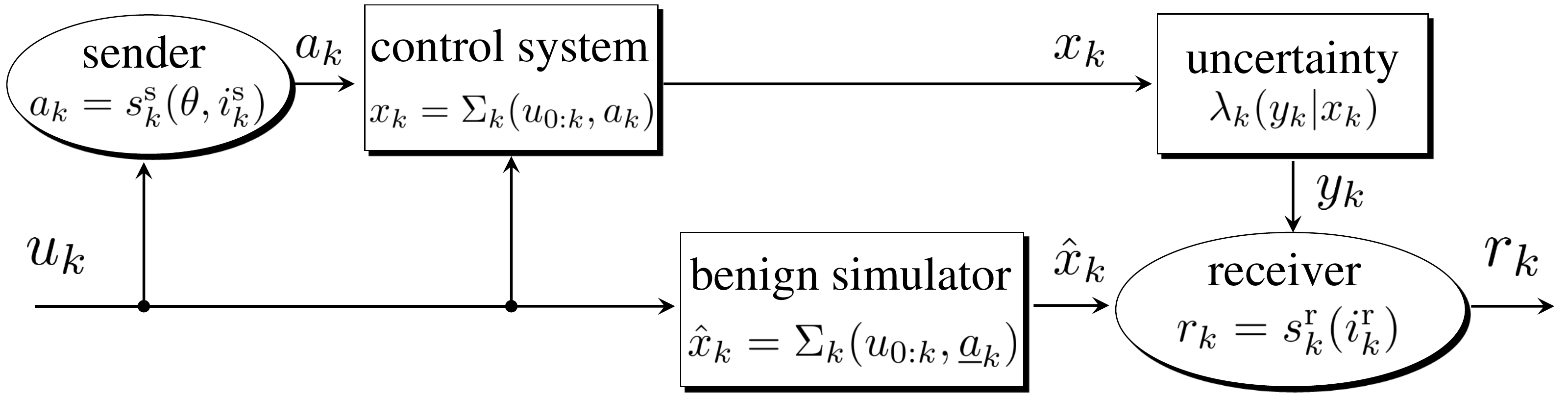}
\caption{Signal-flow diagram of the system under potentially malicious actions and the defense architecture for detection and reaction.}
\label{fig:sys}
\end{figure}

To detect the existence of a malicious sender and cope with potentially malicious actions, we consider a commonly used detection and reaction architecture shown in Fig.~\ref{fig:sys}.
The basic idea of the defense mechanism is to compare the actual signal and the ideal signal under the normal operation and to perform a proper reaction based on the result of the detection process.
Let $Y_k \in \mathcal{Y}$ denote the measurement output which is obtained through an uncertain environment, e.g., noisy measurement and modeling error.
The characteristic of the uncertainty is modeled by a memoryless channel $\lambda_k(y_k|x_k):= p(y_k|x_k) = p(y_k|x_{0:k})$.
The memoryless property is also favorable to the attacker for the same reason as the one of the control system.
Let $\hat{X}_k \in \mathcal{X}$ denote the ideal state at the $k$th step under the benign operation generated through the benign simulator
$\hat{x}_k = \Sigma_k(u_{0:k},\underline{a}_k)$
where $\underline{a}_k$ is the action that is performed if the sender is benign.
Note that $\underline{a}_k$ is known to the players owing to Assumption~\ref{assum:benign} shown below.
Based on the measurement, the other party, called a \emph{receiver}, chooses an action $r_k \in \mathcal{R}$ at each time step.
We henceforth refer to $r_k$ as a \emph{reaction} for emphasizing that $r_k$ denotes a counteraction against potentially malicious attacks.

For avoiding trivial discussions, we make an assumption that guarantees validity of the defense architecture.
\begin{assum}[Valid Defense Architecture]\label{assum:defense}
The defense architecture is valid, i.e.,
\begin{enumerate}
\item For any $u_{0:k}$, if $x_k=\hat{x}_k$ then $a_k=\underline{a}_k$.
\item Let $I(X_k;Y_k)$ denote the mutual information of $X_k$ and $Y_k$~\cite{Cover2006Elements}.
Then
$\inf_{k\in \mathbb{Z}_+}I(X_k;Y_k)> 0$
holds.
\end{enumerate}
\end{assum}
The first assumption means that the system is input observable~\cite{Hou1998Input}, i.e., any malicious attack is immediately noticeable if there is no uncertainty in the measurement process.
Owing to this assumption, we disregard attacks that are impossible to detect even with the most powerful detection tool, such as covert attacks~\cite{Smith2015Covert} and zero-dynamics attacks~\cite{Pasqualetti2015Control}.
The second assumption simply states that the measurement signal has nonzero information on the state,
which means that the channel is valid for attack detection including the limit.

Moreover, for simplicity, the following technical assumption is made.
\begin{assum}[Discrete Variables]\label{assum:tech}
The sets of signals and actions, namely, $\mathcal{U},\mathcal{X},\mathcal{Y},\mathcal{A},\mathcal{R}$ are finite sets.
\end{assum}
Assumption~\ref{assum:tech} makes all random variables discrete.
Then it suffices to consider only probability mass functions.

\subsection{Game Description}
\label{subsec:game}

Based on the system description, we provide each player's decision policy about her actions through a repeated game formulation.
For the formulation, the notions of \emph{strategy, utility, belief,} and \emph{best response} are needed.

The sender chooses an action at each time step based on the type and available information.
Let $\ssk \in \cSsk$ denote the \emph{strategy} of the sender at the $k$th step.
Suppose that the sender uses \emph{pure strategies}.
Then the space of the $k$th sender's strategies is given by
$\cSsk= \{ \ssk:\iTheta \times \cIsk \to \mathcal{A}\}$
where $\cIsk$ denotes the set of the $k$th sender's available information, which consists of signals measured by the sender at the $k$th step.
The elements of $\cIsk$, referred to as information sets in game theory, are explicitly determined in the next subsection (see Assumption~\ref{assum:info}).
Similarly, let $\srk \in \cSrk$ denote the strategy of the receiver at the $k$th step.
The receiver uses pure strategies as well and then the space of the $k$th receiver's strategies is given by
$\cSrk=\{\srk: \cIrk \to \mathcal{R}\}$
where $\cIrk$ denotes the set of the $k$th receiver's available information.

Utilities are employed for representing players' preferences.
Let $\mathbf{U}^{\rm s}_k:\iTheta\times \mathcal{X}\times \mathcal{A}\times \mathcal{R}\to \mathbb{R}$ denote an immediate utility function for the sender such that $\mathbf{U}_k^{\rm s}(\theta,x_k,a_k,r_k)$ gives the utility that she receives when her type is $\theta$, the sender plays the action $a_k$, and the receiver chooses the reaction $r_k$.
Similarly, let $\mathbf{U}^{\rm r}_k: \iTheta\times\mathcal{X}\times \mathcal{A} \times \mathcal{R}\to \mathbb{R}$ denote the receiver's utility function such that $\mathbf{U}^{\rm r}_k(\theta,x_k,a_k,r_k)$ gives her payoff under the same scenario.

As a preparation for analysis, we put an assumption on preferences of the benign sender.
We suppose that the system is optimally controlled without any specific actions of the sender and then actions plays the role only to disturb the system.
Since the benign sender has no motivation to disturb the system, doing nothing becomes the optimal action.
The following assumption is made.
\begin{assum}[Benign Sender's Preference]\label{assum:benign}
There exists
$a_{\rm b} \in \mathcal{A}$ such that $\mathbf{U}^{\rm s}_k(\tb,x_{\rm b},a_{\rm b},r) > \mathbf{U}^{\rm s}_k(\tb,x,a,r')$
where $x_{\rm b} := \Sigma_k(u_{0:k},a_{\rm b})$ and $x := \Sigma_k(u_{0:k},a)$ for any $a\in \mathcal{A}\setminus\{a_{\rm b}\}, u_{0:k}\in \mathcal{U}^{k+1}, r,r'\in \mathcal{R},k\in\mathbb{Z}_+$.
\end{assum}
In Assumption~\ref{assum:benign}, the benign action $a_{\rm b}$ means that the sender does not perform any particular active operations.
We do not put any assumptions on the utilities of the malicious sender and the receiver at this moment.
Instead, we make an alternative assumption that dominates their preferences in the next section (see Assumption~\ref{assum:prefe} in Section~\ref{sec:ana}).

\if0
As a preparation for analysis, we put assumptions on preferences of the players.
First, we suppose that the system is optimally controlled without any specific action of the sender and then actions plays the role only to disturb the system.
Since the benign sender has no motivation to disturb the system, just staying becomes the optimal action naturally.
Next, the receiver can choose the most aggressive reaction under which the receiver's utility is maximized when the type of the sender is malicious.
Finally, the malicious sender is supposed to prefer to choose the action of staying when the most aggressive reaction is chosen for dispelling the doubts.
As a summary of the discussion above, the following assumption is made.
\begin{assum}[Players' Preferences]\label{assum:benign}
The utilities satisfy the following properties:
\begin{enumerate}
\item (benign sender) There exists $a_{\rm b} \in \mathcal{A}$ such that
\[
 \mathbf{U}^{\rm s}(\tb,x,a_{\rm b},r) > \mathbf{U}^{\rm s}(\tb,x',a,r')
\]
for any $a\in \mathcal{A}\setminus\{a_{\rm b}\}, x,x' \in \mathcal{X}, r,r'\in \mathcal{R}$.
\item (receiver)
\if0
There exists $r_{\rm b} \in \mathcal{R}$ such that
\[
 \begin{array}{ll}
 \mathbf{U}^{\rm r}(\tb,x,a,r_{\rm b}) \hs >\mathbf{U}^{\rm r}(\tb,x',a',r),\\
 \mathbf{U}^{\rm r}(\tm,x,a,r_{\rm b}) \hs <\mathbf{U}^{\rm r}(\tm,x',a',r)
 \end{array}
\]
for any $x,x'\in\mathcal{X},a,a'\in\mathcal{A}, r\in \mathcal{R}\setminus{\{r_{\rm b}\}}$.
\fi
There exists $r_{\rm m} \in \mathcal{R}$ such that
\[
 \mathbf{U}^{\rm r}(\tm,x,a,r_{\rm m}) >\mathbf{U}^{\rm r}(\tm,x',a',r)\\
\]
for any $x,x'\in\mathcal{X},a,a'\in\mathcal{A}, r\in \mathcal{R}\setminus{\{r_{\rm m}\}}$.
\item (malicious sender) For $a_{\rm b}\in\mathcal{A},r_{\rm m} \in \mathcal{R}$,
\[
 \mathbf{U}^{\rm s}(\tm,x,a_{\rm b},r_{\rm m})>\mathbf{U}^{\rm s}(\tm,x',a,r_{\rm m})
\]
holds for any $x,x'\in\mathcal{X},a,\in\mathcal{A}\setminus{\{a_{\rm b}\}}$.
\end{enumerate}
\end{assum}
In Assumption~\ref{assum:benign}, the action $a_{\rm b}$ means that the sender stays without any particular active operations.
Similarly, $r_{\rm m}$ stands for the most aggressive reaction.
\fi

When some variables of the utilities are unavailable to the players, \emph{expected utilities} are used for choosing a reasonable strategy instead of the utility itself.
Let $\EUsk:\iTheta\times \mathcal{A}\times \cIsk\times \cSrk\to \mathbb{R}$ denote the sender's expected utility at the $k$th step under a fixed sender's strategy with a realization of her available information for a given receiver's strategy such that
$\EUsk(\theta,a_k,\isk,\srk) := \mathbb{E}\left[\mathbf{U}^{\rm s}_k\left(\theta,X_k,a_k,\srk(\Irk)\right)|\isk\right].$

Similarly, we define the receiver's expected utility.
However, since $\theta$ is not a random variable, the expected value cannot be taken in the normal sense.
Thus, the receiver prepares a \emph{belief} $\pi_k \in \Pi$ such that $\pi_k:\iTheta\to (0,1)$ represents the possibility of the type of the sender for each time step from the viewpoint of the receiver.
Let $\EUrk: \mathcal{R} \times \cIrk \times \cSsk \times \Pi \times \mathcal{S}^{\rm s}_{0:k} \to \mathbb{R}$ denote the receiver's expected utility at the $k$th step under the belief such that
$\EUrk(r_k,\irk,\ssk,\pi_k,\sigszk) := \mathbb{E}\left[ \mathbf{U}^{\rm r}_k(\theta_k,X_k,\ssk(\theta_k,\Isk),r_k ) | \irk\right]$
where $\theta_k\in \hiTheta$ is a random variable obeying the probability distribution obtained by the belief $\pi_{k+1}$ in~\eqref{eq:pik}, shown below, and $\sigszk \in \mathcal{S}^{\rm s}_{0:k}$ is a profile of sender's strategies used for calculation of $\pi_{k+1}$.
Note that we use the symbol $\sigszk$ instead of $s^{\rm s}_{0:k}$ for emphasizing that the strategies $\sigszk$ are not the strategies that are actually chosen by the sender but the estimated strategies from the perspective of the receiver.

The receiver updates her belief based on her observation as the time proceeds according to Bayes' rule, which is given by
${\rm Prob}(\mathcal{E}_1|\mathcal{E}_2) = {\rm Prob}(\mathcal{E}_1\cap \mathcal{E}_2)/{\rm Prob}(\mathcal{E}_2) = {\rm Prob}(\mathcal{E}_2|\mathcal{E}_1){\rm Prob}(\mathcal{E}_1)/{\rm Prob}(\mathcal{E}_2)$
for two events $\mathcal{E}_1$ and $\mathcal{E}_2$.
The dynamics of the belief obeys
\begin{equation}\label{eq:pik}
 \pi_{k+1} = f(\pi_k,\irk,\sigszk)
\end{equation}
with an initial belief $\pi_0 \in \Pi$ such that $\pi_0(\htb)+\pi_0(\htm)=1$ and $\pi_0(\htb),\pi_0(\htm)\in (0,1)$ where $f:\Pi\times \cIrk \times \mathcal{S}_{0:k}^{\rm s} \to \Pi$ represents Bayes' rule under a given sender's strategy given by
\begin{equation}\label{eq:Bayes}
 f\left(\pi,\irk,\sigszk\right)(\theta) := \dfrac{ p_{\sigszk}\left(\irk|\theta\right)\pi(\theta) }{\displaystyle{\sum_{\theta' \in \hiTheta} p_{\sigszk}\left(\irk|\theta'\right)\pi(\theta')}}
\end{equation}
\if0
\[
 \begin{array}{cl}
 f\left(\pi,\irk,\sigszk\right)(\theta) \hs := \dfrac{ p_{\sigszk}\left(\irk|\theta\right)\pi(\theta) }{\displaystyle{\sum_{\theta' \in \hiTheta} p_{\sigszk}\left(\irk|\theta'\right)\pi(\theta')}} \\
 \hs = \dfrac{ p_{\sigszk}\left(\irk|\theta,i^{\rm r}_{k-1}\right) p_{\sigszk}(i^{\rm r}_{k-1}) \pi(\theta) }{\displaystyle{\sum_{\theta' \in \hiTheta} p_{\sigszk}\left(\irk|\theta',i^{\rm r}_{k-1}\right) p_{\sigszk}(i^{\rm r}_{k-1}) \pi(\theta')}} \\
 \hs = \dfrac{ p_{\sigszk}\left(\irk|\theta,i^{\rm r}_{k-1}\right)\pi(\theta) }{\displaystyle{\sum_{\theta' \in \hiTheta} p_{\sigszk}\left(\irk|\theta',i^{\rm r}_{k-1}\right)\pi(\theta')}}
 \end{array}
\]
\fi
where $\sigszk \in \mathcal{S}_{0:k}^{\rm s}$ represents a profile of estimated sender's strategies used for calculating the posterior belief and $p_{\sigszk}(\cdot)$ represents the corresponding probability under $A_i=\sigma^{\rm s}_i(\theta,I^{\rm s}_i)$ for $i=0,\ldots,k$.

A reasonable choice of each player's strategies is given by a \emph{best response} provided that the opponent's strategy is given.
A sender's strategy $\ssk$ is said to be a best response for a given receiver's strategy $\srk$ under the type $\theta$ when $\ssk$ satisfies the condition
$\ssk(\theta,\isk) \in \argmax_{a_k \in \mathcal{A}} \EUsk(\theta,a_k,\isk,\srk)$
for any $\isk \in \cIsk$, where it is supposed that the best response depends only on the immediate utility for simplicity.
To ease notation, we let $\BRsk:\iTheta\times\cSrk\to\mathfrak{P}(\mathcal{S}^{\rm s}_k)$ denote the set-valued function of all sender's best responses for a given receiver's strategy.
Similarly, a receiver's strategy $\srk$ is said to be a best response for a given $\ssk$ with the estimated sender's strategies $\sigszk$ under the belief $\pi_k$ when $\srk$ satisfies the condition
$\srk(\irk) \in \argmax_{r_k \in \mathcal{R}} \EUrk(r_k,\irk,\ssk,\pi_k,\sigszk)$
for any $\irk \in \cIrk$.
Let $\BRrk:\cSsk\times\Pi\times\mathcal{S}_{0:k}^{\rm s} \to\mathfrak{P}\left(\cSrk\right)$ denote the set-valued function of all receiver's best responses.

Finally, all models of the game and the initial belief are assumed to be known to both players.

Basically, the players choose their own strategies along the path generated by the best response.
The best response of the malicious sender is influenced by the receiver's belief through the receiver's strategy.
Thus the malicious sender is allowed to execute aggressive actions when she recognizes that the belief $\pi_k(\htm)$ is small.
This discussion leads us to the expectation that, the more information on the belief is leaked, the more serious the damage caused by the attack is, and vice versa.
From this perspective, we analyze the behaviors of the players under \emph{covert reactions} formulated in the subsequent subsection.

\subsection{Best Response Strategies under Undisclosed Belief}
\label{subsec:equ}

Consider the case where the receiver successfully conceals information on the belief from the sender, namely, the case of covert reaction.
The following assumption is made.
\begin{assum}[Covert Reaction]\label{assum:info}
The available information of the sender at the $k$th step is given by
$\isk=\{u_{0:k},a_{0:k-1}\}.$
The available information of the receiver is given by
$\irk = \{u_{0:k},\hat{x}_{0:k},y_{0:k},r_{0:k-1}\}.$
\end{assum}
Note that $\isk$ does not include $y_{0:k}$ and $r_{0:k-1}$.
If $r_{0:k-1}$ is available to the sender, the region of the belief can be computed and hence $r_{0:k-1}$ should be covert.
Further, since $\hat{x}_k$ can be computed from $u_{0:k}$, for keeping the belief undisclosed the receiver necessarily conceals $y_{0:k}$, with which the transition of the belief can completely be calculated.

One difficulty for the analysis of the game under Assumption~\ref{assum:info} is that a commonly used equilibrium concept, such as perfect Bayesian Nash equilibrium, cannot be employed because the players are unable to compute the equilibria for deciding reasonable actions.
To handle this situation, we introduce a novel framework in the following discussion.

From the perspective of the sender, one approach for choosing a reasonable strategy without the exact value of the belief is to estimate the belief only with available information.
We suppose that the sender uses the conditional expected value given a sender's strategy and the available information as an estimation of the belief.
Let $\hat{\pi}_k\in \Pi$ denote the estimated belief given by
$\hat{\pi}_{k+1}= {\rm EX}^{\pi}_k(\theta,s^{\rm s}_{0:k},i^{\rm s}_{k},s^{\rm r}_{0:k},\sigma^{\rm s}_{0:k})$
with ${\rm EX}^{\pi}_k:\iTheta\times\mathcal{S}^{\rm s}_{0:k}\times \mathcal{I}^{\rm s}_{k} \times \mathcal{S}^{\rm r}_{0:k} \times \mathcal{S}^{\rm s}_{0:k}\to \Pi$ defined by
${\rm EX}^{\pi}_k(\theta,s^{\rm s}_{0:k},i^{\rm s}_{k},s^{\rm r}_{0:k},\sigma^{\rm s}_{0:k})(\theta') := \mathbb{E}\left[ \pi_{k+1}(\theta')|\theta, s^{\rm s}_{0:k}, i^{\rm s}_{k},s^{\rm r}_{0:k}, \sigma^{\rm s}_{0:k}\right]$
where $\sigma^{\rm s}_{0:k}$ is the estimated sender's strategy profile from the receiver's side used for the Bayesian inference~\eqref{eq:Bayes}.
With the estimation, the sender can choose $s^{{\rm s}\ast}_k$ according to
\begin{equation}\label{eq:equili}
 \left\{
 \begin{array}{cl}
  s^{{\rm s}\ast}_k \hs \in \BRsk(\theta,s^{{\rm r}\ast}_k)\\
  s^{{\rm r}\ast}_k \hs \in \BRrk(s^{{\rm s}\ast}_k,\hat{\pi}_k,\sigszk)\\
  \hat{\pi}_k\hs = {\rm EX}^{\pi}_{k-1}(\theta,s^{\rm s}_{0:k-1},i^{\rm s}_{k-1},s^{{\rm r}\ast}_{0:k-1}, \sigma^{\rm s}_{0:k-1})
 \end{array}
 \right.
\end{equation}
provided that $\sigma^{\rm s}_{0:k}$ is known.

Since the exact values of $\sigszk$ are private information of the receiver, however, the sender has to estimate the estimated sender's strategies to obtain a reasonable strategy as well.
Let $\hat{\sigma}_{0:k}^{\rm s}$ denote the estimation of the estimated strategies $\sigszk$.
Because sender's strategies must be chosen to be reasonable, the conditions
\[
\left\{
 \begin{array}{cl}
  \hat{\sigma}^{{\rm s}}_i \hs \in {\rm BR}^{\rm s}_i(\theta,\hat{\sigma}^{{\rm r}}_i)\\
  \hat{\sigma}^{{\rm r}}_i \hs \in {\rm BR}^{\rm r}_i(\hat{\sigma}^{{\rm s}}_i,\hat{\hat{\pi}}_i,\hat{\hat{\sigma}}_{0:i}^{\rm s})\\
  \hat{\hat{\pi}}_i \hs = {\rm EX}^{\pi}_{i-1}(\theta,\hat{\sigma}^{{\rm s}}_{i-1},i^{\rm s}_{i-1},\hat{\sigma}^{\rm r}_{0:i-1},\hat{\hat{\sigma}}_{0:i-1}^{\rm s})
 \end{array}
\right.\\
\]
for any $i=0,\ldots,k$ are expected to hold under another estimated sender's strategy $\hat{\hat{\sigma}}_{0:k}^{\rm s}$.
Likewise, further estimations are required over and over and the same procedure is performed infinitely many times.

In this paper, we suppose that the sender uses the true strategies $s^{\rm s}_{0:k}$ as the estimated strategies $\sigszk$ for truncating the infinite chain of estimations.
Then we can define reasonable sender's strategies based on best responses as follows.
\begin{defin}[Best Response Strategy]\label{def:adm}
A sender's strategy profile $\{\ssk\}_{k\in \mathbb{Z}+}$ is said to be \emph{a sender's best response strategy profile} when there exists $\{s^{\rm r}_k\}_{k\in \mathbb{Z}_+}$ such that
\[
 \left\{
 \begin{array}{cl}
  s^{{\rm s}}_k \hs \in \BRsk(\theta,s^{{\rm r}}_k)\\
  s^{{\rm r}}_k \hs \in \BRrk(s^{{\rm s}}_k,\hat{\pi}_k,s^{\rm s}_{0:k})\\
  \hat{\pi}_k\hs = {\rm EX}^{\pi}_{k-1}(\theta,s^{{\rm s}}_{0:k-1},i^{\rm s}_{k-1},s^{\rm r}_{0:k-1},s^{\rm s}_{0:k-1})
 \end{array}
 \right.
\]
holds for any $k\in\mathbb{Z}_+, \theta\in \iTheta, i^{\rm s}_{k-1}\in \mathcal{I}^{\rm s}_{k-1}$.
\end{defin}
Definition~\ref{def:adm} is obtained by simply substituting $s^{\rm s}_{0:k}$ to $\sigszk$ in~\eqref{eq:equili}.
Note that, although we can define receiver's best response strategies in a similar way, we do not explicitly state the definition because it is unnecessary for our analysis.

The main objective of this study is to investigate the benefit induced by the confidentiality of the reactions on the resilience of the control systems.
More precisely, we aim at analyzing the reasonable behavior of the malicious sender under covert reactions.
Based on the preparation, we mathematically analyze sender's best response strategies in the next section.

\section{Analysis: Asymptotic Security}
\label{sec:ana}

In this section, we reveal that the control system can be guaranteed to be secure in an asymptotic manner through covert reaction, which results in \emph{asymptotic security.}

\subsection{Definition of Asymptotic Security}
First of all, we claim that the benign sender always chooses $a_{\rm b}$ under any best response strategies at any time.
\begin{prop}[Benign Sender's Strategy]\label{prop:benign}
Under Assumption~\ref{assum:benign}, any sender's best response strategy profile satisfies
$s^{\rm s}_k(\tb,I^{\rm s}_k) = a_{\rm b}$
almost surely for any $k\in \mathbb{Z}_+$.
\end{prop}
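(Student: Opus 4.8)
The plan is to argue pointwise over realizations of the sender's information: I will show that, for every realization $\isk=\{u_{0:k},a_{0:k-1}\}\in\cIsk$, the benign action $a_{\rm b}$ is the \emph{unique} maximizer of $\EUsk(\tb,\,\cdot\,,\isk,\srk)$ for whatever receiver strategy $\srk$ accompanies the given best response strategy profile. The proposition then follows at once from the best-response condition $\ssk(\tb,\isk)\in\BRsk(\tb,\srk)$ built into Definition~\ref{def:adm}.

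First I would unpack the expected utility. Given $\isk$ and a candidate action $a_k$, the state is deterministic, $X_k=\Sigma_k(u_{0:k},a_k)$, so
\[
 \EUsk(\tb,a_k,\isk,\srk)=\mathbb{E}\!\left[\mathbf{U}^{\rm s}_k\!\left(\tb,\Sigma_k(u_{0:k},a_k),a_k,\srk(\Irk)\right)\big|\isk\right],
\]
and the only remaining randomness lies in $\Irk$ (through $y_{0:k}$), which produces the realized reaction $\srk(\Irk)\in\mathcal{R}$.

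Next I would invoke Assumption~\ref{assum:benign}. Writing $x_{\rm b}:=\Sigma_k(u_{0:k},a_{\rm b})$ and $x:=\Sigma_k(u_{0:k},a)$ for $a\in\mathcal{A}\setminus\{a_{\rm b}\}$, it gives the inequality $\mathbf{U}^{\rm s}_k(\tb,x_{\rm b},a_{\rm b},r)>\mathbf{U}^{\rm s}_k(\tb,x,a,r')$ uniformly over $r,r'\in\mathcal{R}$. Since $\mathcal{R}$ is finite by Assumption~\ref{assum:tech}, the extrema are attained, hence
\[
 \min_{r\in\mathcal{R}}\mathbf{U}^{\rm s}_k(\tb,x_{\rm b},a_{\rm b},r)>\max_{r'\in\mathcal{R}}\mathbf{U}^{\rm s}_k(\tb,x,a,r').
\]
Bounding the integrand for $a_k=a_{\rm b}$ below by the left-hand extremum, and the integrand for $a_k=a$ above by the right-hand extremum, \emph{before} taking expectations, yields $\EUsk(\tb,a_{\rm b},\isk,\srk)>\EUsk(\tb,a,\isk,\srk)$ for every $a\neq a_{\rm b}$. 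Therefore $\argmax_{a_k\in\mathcal{A}}\EUsk(\tb,a_k,\isk,\srk)=\{a_{\rm b}\}$, so any best response satisfies $\ssk(\tb,\isk)=a_{\rm b}$; as $\isk\in\cIsk$ was arbitrary, the identity holds for $\Isk$ surely, and in particular almost surely, for every $k\in\mathbb{Z}_+$.

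I do not expect a genuine obstacle: the statement is essentially a direct corollary of Assumption~\ref{assum:benign}. The only delicate point — and the reason for proceeding via the extrema rather than a termwise comparison — is that the law of the receiver's information $\Irk$, and hence of the reaction the benign sender actually faces, depends on the sender's own action $a_k$; so the two expected utilities are computed under different measures and cannot be compared integrand by integrand. The uniform domination over the finite set $\mathcal{R}$ is precisely what lets the comparison survive this mismatch.
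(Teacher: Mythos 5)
Your proposal is correct and takes essentially the same route as the paper, which simply states that Assumption~\ref{assum:benign} directly yields the claim; your write-up just fills in the routine details, correctly identifying that the uniformity of the inequality over $r,r'\in\mathcal{R}$ is what makes the expected-utility comparison immune to the action-dependence of the reaction's distribution.
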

\begin{proof}
Assumption~\ref{assum:benign} directly leads to the claim.
\end{proof}


To clearly state the resilience achieved through covert reaction, we define a notion of security as follows.
\begin{defin}[Asymptotic Security]\label{def:asy}
The control system is said to be \emph{asymptotically secure} when any sender's best response strategy profile satisfies
$\ssk(\tm,I^{\rm s}_k)\to a_{\rm b}$
almost surely as $k \to \infty$.
\end{defin}
Definition~\ref{def:asy} means that the reasonable strategy of the malicious sender converges to the benign one regardless of her available information.
Note that in Definition~\ref{def:asy} the limit in $\mathcal{A}$ is taken in the following sense:
A sequence of actions $a_k \in \mathcal{A}$ for ${k\in \mathbb{Z}_+}$ converges to $a \in \mathcal{A}$ when there exists $N\in \mathbb{Z}_+$ such that $a_k=a$ for any $k>N$, which is obtained by giving the discrete topology~\cite{Munkres2000Topology} on $\mathcal{A}$. 

\subsection{Asymptotic Security by Covert Reaction}
While the preference of the benign sender is determined in Assumption~\ref{assum:benign}, we have made no assumptions on preferences of the malicious sender and the receiver.
We here make a reasonable assumption that should be satisfied by sender's best response strategies instead of making an assumption on their utilities themselves.
\begin{assum}[Players' Preference] \label{assum:prefe}
Any sender's best response strategy profile satisfies
${\rm Prob}\left(\mathcal{E}_{\{\ssk\}}\right)=1$
where
\[
 \mathcal{E}_{\{\ssk\}} :=\left\{\omega: \limsup_{k\in \mathbb{Z}_+}\pi_{{\rm m},k}(\omega) < 1\ {\rm under}\ \{\ssk\}\right\},
\]
which belongs to $\mathfrak{F}$, with $\pi_{{\rm m},k}:= \pi_k(\htm)$.

\end{assum}
Assumption~\ref{assum:prefe} means that any possible belief does \emph{not} approach one if the sender's strategy is given by best responses.
In other words, rational malicious senders avoid the situation where the receiver has a firm belief on existence of the malicious sender.
This assumption can be regarded as a replacement of an assumption on the utilities of the malicious sender and the receiver for determining their preferences.
Although it is more desirable to derive the statement in Assumption~\ref{assum:prefe} from an appropriate assumption on the utilities,
we treat the claim as an assumption in this paper.

Based on the preparation, the following theorem, the main result of this paper, holds.
\begin{theorem}[Asymptotic Security]\label{thm:main}
Let \Cref{assum:defense,assum:benign,assum:info,assum:prefe,assum:tech} hold.
Then the control system is asymptotically secure.
\end{theorem}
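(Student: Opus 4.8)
\emph{Proof plan.} Proposition~\ref{prop:benign} settles the benign type, so I would fix a sender's best response strategy profile and argue under $\theta=\tm$; write $\pi_{{\rm m},k}:=\pi_k(\htm)$ and $A_k:=\ssk(\tm,\Isk)$. Since the benign sender plays $a_{\rm b}$ (Proposition~\ref{prop:benign}), the benign action satisfies $\underline{a}_k=a_{\rm b}$, hence $\hat{x}_k=\Sigma_k(u_{0:k},a_{\rm b})$, and by the input-observability part of Assumption~\ref{assum:defense} the realized state equals $\hat{x}_k$ if and only if $A_k=a_{\rm b}$. Because $\mathcal{A}$ carries the discrete topology, the target $\ssk(\tm,\Isk)\to a_{\rm b}$ is equivalent to ``$A_k=a_{\rm b}$ for all large $k$'', i.e.\ to the negation of ``$A_k\neq a_{\rm b}$ for infinitely many $k$'', which is what I would rule out up to a null set.

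The driver is the receiver's belief process. A best response profile feeds the \emph{true} strategies $s^{\rm s}_{0:k}$ into the Bayes update~\eqref{eq:Bayes}, so the inference is consistent: conditioned on $\theta=\tm$ and on the receiver's information just before $Y_k$ is revealed, $Y_k\sim\lambda_k(\cdot\mid x_k)$ while the benign likelihood uses $\hat{x}_k$. A direct computation with~\eqref{eq:Bayes} then shows the conditional mean of the one-step increment of $\pi_{{\rm m},k}$ equals $\pi_{{\rm m},k}(1-\pi_{{\rm m},k})^2 D_k\ge 0$, where $D_k$ is a $\chi^2$-type dissimilarity of the channel rows $\lambda_k(\cdot\mid x_k)$ and $\lambda_k(\cdot\mid\hat{x}_k)$. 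This vanishes exactly when $A_k=a_{\rm b}$; when $A_k\neq a_{\rm b}$ one has $x_k\neq\hat{x}_k$, and I would invoke the uniform information bound of Assumption~\ref{assum:defense} together with the finiteness in Assumption~\ref{assum:tech} to produce a constant $c_0>0$ with $D_k\ge c_0$. Hence $\{\pi_{{\rm m},k}\}$ is a bounded submartingale under $\theta=\tm$, so by martingale convergence it converges a.s.\ to some $\pi_{{\rm m},\infty}$, and since its nonnegative conditional increments sum to an a.s.-finite total, those increments tend to $0$ a.s.

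Next I would place $\pi_{{\rm m},\infty}$ strictly inside $(0,1)$. By the same consistency, $L_k:=\pi_k(\htb)/\pi_k(\htm)$ is, under $\theta=\tm$, a nonnegative (super)martingale with $\mathbb{E}[L_k]\le L_0=\pi_0(\htb)/\pi_0(\htm)<\infty$, so $L_k\to L_\infty<\infty$ a.s.\ and $\pi_{{\rm m},\infty}=1/(1+L_\infty)>0$ a.s.; and Assumption~\ref{assum:prefe} gives $\pi_{{\rm m},\infty}=\limsup_k\pi_{{\rm m},k}<1$ a.s. Thus $\pi_{{\rm m},\infty}(1-\pi_{{\rm m},\infty})^2>0$ a.s. If, on a set of positive probability, $A_k\neq a_{\rm b}$ for infinitely many $k$, then on that set intersected with the full-probability events just obtained the conditional increments along the times with $A_k\neq a_{\rm b}$ are at least $c_0\,\pi_{{\rm m},k}(1-\pi_{{\rm m},k})^2\to c_0\,\pi_{{\rm m},\infty}(1-\pi_{{\rm m},\infty})^2>0$, contradicting that these increments go to $0$. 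Hence $A_k=a_{\rm b}$ eventually, a.s., and with Proposition~\ref{prop:benign} the system is asymptotically secure. Note that, beyond Proposition~\ref{prop:benign} and Assumption~\ref{assum:prefe}, the only feature of best response profiles used is consistency of the Bayesian update, so the estimated-belief apparatus of Definition~\ref{def:adm} enters only implicitly.

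\emph{Expected main obstacle.} The one genuinely quantitative step is the uniform bound $D_k\ge c_0>0$ whenever the malicious sender deviates from $a_{\rm b}$: turning ``$\inf_{k}I(X_k;Y_k)>0$'' plus finiteness of the alphabets into a fixed lower bound on the per-step information the receiver gains about the type on any such deviation (equivalently, a uniform separation of the relevant channel rows). The remainder is bookkeeping with Bayes' rule and standard (sub/super)martingale convergence.
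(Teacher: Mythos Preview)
Your argument is correct and reaches the same conclusion as the paper, but by a genuinely different route. The paper works with the sender's \emph{estimated} belief $\hat{\pi}_k(\htm)=\mathbb{E}[\pi_k(\htm)\mid \tm,i^{\rm s}_{k-1},\ldots]$ and proves, as a separate lemma (Lemma~\ref{lem:mono}), that this sequence is \emph{pathwise monotone} along the sender's information, with equality exactly when the malicious action is $a_{\rm b}$; monotone convergence of $\hat{\pi}_k$ plus a contradiction argument (strict increase forever is incompatible with a finite limit once $G_k>1$ uniformly along deviations) then forces $\hat{\pi}_k$ to stabilise in finite time. You instead stay with the receiver's \emph{actual} belief $\pi_{{\rm m},k}$, identify it as a bounded submartingale under $\theta=\tm$ with respect to the receiver's filtration, invoke martingale convergence and summability of the nonnegative compensator to drive the conditional increments to zero, and pin the limit strictly inside $(0,1)$ via the likelihood-ratio martingale $L_k$ (for the lower bound) and Assumption~\ref{assum:prefe} (for the upper bound). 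Both arguments ultimately rest on the same inequality $\sum_y m_y^2/(\alpha m_y+(1-\alpha)b_y)\ge 1$---the paper derives it by AM--GM in the proof of Lemma~\ref{lem:mono}---and both face the same quantitative hurdle you flag: converting $\inf_k I(X_k;Y_k)>0$ (together with Assumption~\ref{assum:tech}) into a uniform separation of the relevant channel rows on deviation. What your route buys is a cleaner appeal to standard Bayesian-learning/martingale machinery and a transparent observation that only the \emph{consistency} $\sigma^{\rm s}_{0:k}=s^{\rm s}_{0:k}$ and Assumption~\ref{assum:prefe} are used from the best-response structure; what the paper's route buys is that it argues directly about $\hat{\pi}_k$, the very object that governs the sender's decisions in Definition~\ref{def:adm}, so the tie to the game mechanism is more explicit. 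One minor caveat: the exact form $\pi_{{\rm m},k}(1-\pi_{{\rm m},k})^2 D_k$ you quote for the conditional increment is not quite the algebraic identity one gets (the factor in $(1-\pi_{{\rm m},k})$ is linear rather than quadratic in the clean computation), but the structural claim---nonnegative, zero iff $A_k=a_{\rm b}$, bounded below by a positive multiple of a $\chi^2$-type channel divergence on deviation---is what you need and is correct.
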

Essentially, Theorem~\ref{thm:main} is derived from the monotonicity of the estimated belief shown in the following lemma.
\begin{lem}[Monotonicity of Estimated Belief]\label{lem:mono}
Assume that $\{\ssk\}_{k\in \mathbb{Z}_+}$ is a sender's best response strategy profile.
Then under  \Cref{assum:defense,assum:benign,assum:info,assum:prefe,assum:tech}
\[
 \hat{\pi}_{k+1} (\htm) \geq \hat{\pi}_k(\htm)
\]
almost surely, where the equality holds if and only if $\ssk(\tm,\isk)=a_{\rm b}$.
\end{lem}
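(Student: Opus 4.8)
The plan is to express $\hat{\pi}_{k+1}(\tm)$ and $\hat{\pi}_k(\tm)$ as conditional expectations of the \emph{true} belief $\pi_k(\tm)$ taken along the malicious type, and then to show that this true belief, under the law induced by $\theta=\tm$, is a submartingale with strictly positive increments exactly at the steps where the malicious action differs from $a_{\rm b}$. A point that makes this work: Definition~\ref{def:adm} sets the estimated strategy profile equal to the true one $s^{\rm s}_{0:k}$, so the Bayes update \eqref{eq:Bayes} generating $\pi_k$ is consistent with the actual data-generating process, which is exactly what the submartingale argument needs. The benign type is trivial: by Proposition~\ref{prop:benign} the benign sender plays $a_{\rm b}$ at every step, so $x_k=\hat x_k$ always and the belief is never updated; we therefore fix $\theta=\tm$ throughout.

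First I would unpack Definition~\ref{def:adm}. Since the strategy profile is fixed and $\isk=\{u_{0:k},a_{0:k-1}\}$ with each $a_j=s^{\rm s}_j(\theta,i^{\rm s}_j)$ a deterministic function of $(\theta,u_{0:j})$ (by induction on $j$), conditioning on $(\theta,\isk)$ together with the strategy profiles is the same as conditioning on $(\theta,u_{0:k})$, so
\[
 \hat{\pi}_{k+1}(\tm)=\mathbb{E}[\,\pi_{k+1}(\tm)\mid\theta=\tm,\,U_{0:k}\,],\qquad \hat{\pi}_k(\tm)=\mathbb{E}[\,\pi_k(\tm)\mid\theta=\tm,\,U_{0:k-1}\,].
\]
Because $\pi_k$ is a function of $i^{\rm r}_{k-1}$ only --- hence of $U_{0:k-1}$ and the measurement noise up to step $k-1$ --- and the inputs are mutually independent, conditioning additionally on $U_k$ leaves $\mathbb{E}[\pi_k(\tm)\mid\theta=\tm,U_{0:k-1}]$ unchanged; thus it suffices to prove $\mathbb{E}[\pi_{k+1}(\tm)\mid\theta=\tm,U_{0:k}]\ge\mathbb{E}[\pi_k(\tm)\mid\theta=\tm,U_{0:k}]$ almost surely.

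Next I would isolate the one-step update. Conditioned on $i^{\rm r}_{k-1}$ (which fixes $\pi_k$ and the reaction $r_{k-1}$) and on $U_k$ (which fixes $\hat x_k$), the only new, type-dependent component of $\irk$ is $Y_k$, so the recursive form of \eqref{eq:Bayes} and memorylessness of $\lambda_k$ give
\[
 \pi_{k+1}(\tm)=\frac{\lambda_k(Y_k\mid x^{\rm m}_k)\,\pi_k(\tm)}{\lambda_k(Y_k\mid x^{\rm m}_k)\,\pi_k(\tm)+\lambda_k(Y_k\mid\hat x_k)\,\pi_k(\tb)},
\]
where $x^{\rm m}_k:=\Sigma_k(u_{0:k},s^{\rm s}_k(\tm,\isk))$ and $\hat x_k=\Sigma_k(u_{0:k},a_{\rm b})$ (using Proposition~\ref{prop:benign} to identify $\underline a_k=a_{\rm b}$), and $Y_k\sim\lambda_k(\cdot\mid x^{\rm m}_k)$ under $\theta=\tm$. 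I would then record the elementary identity that, for any interior prior $\pi$ on $\{\tm,\tb\}$ and pmfs $p_{\tm},p_{\tb}$ on $\mathcal{Y}$, if $Y\sim p_{\tm}$ and $\pi'$ is the Bayes posterior then
\[
 \mathbb{E}[\pi'(\tm)]-\pi(\tm)=\pi(\tm)\,\pi(\tb)^2\sum_{y\in\mathcal{Y}}\frac{(p_{\tm}(y)-p_{\tb}(y))^2}{\pi(\tm)p_{\tm}(y)+\pi(\tb)p_{\tb}(y)}\ \ge\ 0,
\]
with equality iff $p_{\tm}=p_{\tb}$ (a one-line rearrangement, equivalently Cauchy--Schwarz exhibiting the posterior as a submartingale under the true hypothesis). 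Applying this conditionally on $(\theta=\tm,i^{\rm r}_{k-1},U_k)$ with $\pi=\pi_k$, $p_{\tm}=\lambda_k(\cdot\mid x^{\rm m}_k)$, $p_{\tb}=\lambda_k(\cdot\mid\hat x_k)$, and then averaging over $(i^{\rm r}_{k-1},U_k)$ given $U_{0:k}$, yields $\hat{\pi}_{k+1}(\tm)\ge\hat{\pi}_k(\tm)$ almost surely.

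Finally, for the equality case: the inequality above is tight (a.s.) exactly when $\lambda_k(\cdot\mid x^{\rm m}_k)=\lambda_k(\cdot\mid\hat x_k)$, and since $x^{\rm m}_k,\hat x_k$ are deterministic given $u_{0:k}$ this is a pointwise condition on $\isk$. If $s^{\rm s}_k(\tm,\isk)=a_{\rm b}$ then $x^{\rm m}_k=\hat x_k$ and equality holds; conversely, if $s^{\rm s}_k(\tm,\isk)\ne a_{\rm b}=\underline a_k$ then Assumption~\ref{assum:defense}(1) gives $x^{\rm m}_k\ne\hat x_k$, and Assumption~\ref{assum:defense}(2) --- the measurement carrying nonzero information on the state, which for the binary state law induced by $u_{0:k}$ forces the two conditionals to differ --- gives $\lambda_k(\cdot\mid x^{\rm m}_k)\ne\lambda_k(\cdot\mid\hat x_k)$, hence a strict increment. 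The step I expect to be the main obstacle is precisely this last implication: converting ``$a_k\ne a_{\rm b}$'' into ``the conditional law of $Y_k$ genuinely differs from the benign one'' via a correct use of Assumption~\ref{assum:defense}(2); the remainder (which $\sigma$-field each conditional expectation lives on, and keeping $\pi_k$ interior) is routine given Assumptions~\ref{assum:info} and~\ref{assum:tech}.
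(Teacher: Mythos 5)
Your proposal is correct and follows essentially the same route as the paper's proof: both reduce $\hat{\pi}_{k+1}(\htm)-\hat{\pi}_k(\htm)$ to the expected one-step Bayes increment of the true posterior under the malicious hypothesis, use independence of the inputs to bring the two estimated beliefs onto a common conditioning (your observation that conditioning additionally on $U_k$ leaves $\mathbb{E}[\pi_k(\htm)\mid\tm,U_{0:k-1}]$ unchanged is exactly the paper's step $p(i^{\rm r}_{k-1}\mid\tm,i^{\rm s}_k)=p(i^{\rm r}_{k-1}\mid\tm,i^{\rm s}_{k-1})$), and close the equality case through the same chain $\lambda_k(\cdot\mid x^{\rm m}_k)=\lambda_k(\cdot\mid\hat x_k)$ $\Rightarrow$ $x^{\rm m}_k=\hat x_k$ $\Rightarrow$ $a_k=a_{\rm b}$ via Assumption~\ref{assum:defense} and Proposition~\ref{prop:benign}. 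The one substantive local difference is the mechanism for the key inequality: the paper lower-bounds $G_k(i)=\sum_{y} p_{\rm m}(y,i)^2/\bigl(\pi_k(\htb)p_{\rm b}(y,i)+\pi_k(\htm)p_{\rm m}(y,i)\bigr)$ by $1$ via the weighted AM--GM inequality, whereas you compute the increment exactly; your identity is correct (it is equivalent to $G_k(i)-1=\pi_k(\htb)^2\sum_y (p_{\rm m}-p_{\rm b})^2/\bigl(\pi_k(\htm)p_{\rm m}+\pi_k(\htb)p_{\rm b}\bigr)$, which one verifies using $\sum_y(p_{\rm m}-p_{\rm b})=0$), and it makes both nonnegativity and the equality condition $p_{\rm m}\equiv p_{\rm b}$ immediate rather than extracted from the AM--GM equality case. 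The caveat you flag at the end is real but not yours alone: inferring $\lambda_k(\cdot\mid x^{\rm m}_k)\neq\lambda_k(\cdot\mid\hat x_k)$ from $x^{\rm m}_k\neq\hat x_k$ via the positivity of $I(X_k;Y_k)$ is used in exactly the same informal way in the paper's own proof (positive mutual information under the marginal law of $X_k$ does not literally force every pair of distinct states to induce distinct output laws), so your argument introduces no gap beyond what the paper already accepts.
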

For the proofs of Lemma~\ref{lem:mono} and Theorem~\ref{thm:main}, see Appendix.

Lemma~\ref{lem:mono} implies that the estimated belief on existence of a malicious sender does \emph{not} decrease regardless of what strategy the malicious sender chooses at the $k$th step.
Furthermore, the estimated belief is invariant only if she chooses the benign action, which is the most unsuspicious behavior.
Because of the monotonicity of the estimated belief, strategies that can be chosen by the malicious sender are eventually restricted as the time proceeds.
When the expected belief becomes sufficiently large, the possible strategy becomes the benign one alone, which is the claim of Theorem~\ref{thm:main}.

Theorem~\ref{thm:main} implies that the control system is guaranteed to be secure in an asymptotic manner if the receiver's belief is kept to be undisclosed through covert reactions.
The result clarifies the relationship between secrecy of the defender's belief and the resilience of the control system to be defended.
What should be emphasized here is that no particular assumptions are made on the structure of the control system and the sender's utility function except for Assumption~\ref{assum:prefe}, under which the sender avoids having the receiver have a firm belief on existence of a malicious attacker.
In other words, the finding is irrelevant to the control system and the objective of the attacker.
In this sense, the result implies that covertness of reactions is an essential requirement for enhancing resilience of control systems.

\if0
{\it Remark:}
The result can also be stated with a commonly used terminology in signaling-game theory as follows.
A triplet $(s^{\rm s},s^{\rm r},\pi)$ is said to be a perfect Bayesian Nash equilibrium of a single step game if both strategies are given as best responses to each other and the belief update satisfies Bayes' rule.
The equilibria can be classified into the two groups: pooling equilibria and (partially) separating equilibria.
An equilibrium is called a pooling equilibrium if the sender's strategies are identical with respect to her type at the equilibrium.
In that sense, Theorem~\ref{thm:main} implies that the best response strategy profile converges to a pooling equilibrium from the sender's viewpoint.
\fi

\section{Numerical Example}
\label{sec:num}

We confirm the validity of the theoretical results through a simple numerical example.
Suppose that all sets of signals and actions are binary, i.e., $\mathcal{X}=\mathcal{Y}=\mathcal{U}=\mathcal{A}=\mathcal{R}=\{0,1\}$.
The input signals $U_k$ are supposed to be independent and identically distributed with ${\rm prob}(U_k=0)={\rm prob}(U_k=1)=0.5$.
The system is given by
$\Sigma_k(u_{0:k},0)=u_k$ and $\Sigma_k(u_{0:k},1)=1-u_k$.
The channel characteristic is given by $\lambda_k(0|0)=\lambda_k(1|1)=\lambda$ with a given $\lambda \in (0,1)\setminus{\{0.5\}}$.
The utility of the benign sender is given by
$\mathbf{U}^{\rm s}_k(\tb,x,0,r)=1$ and $\mathbf{U}^{\rm s}_k(\tb,x,1,r)=0$.
The other utilities are given by
\[
\begin{array}{ll}
 \mathbf{U}^{\rm s}_k(\tm,x,a,r)\hs=
 \left\{
 \begin{array}{cl}
 3,\hs {\rm if}\ a=1,r=0,\\
 2,\hs {\rm if}\ a=0,r=0,\\
 1,\hs {\rm if}\ a=0,r=1,\\
 0,\hs {\rm if}\ a=1,r=1,
 \end{array}\right.\\
 \mathbf{U}^{\rm r}_k(\theta,x,a,r)\hs=
 \left\{
 \begin{array}{cl}
 1,\hs {\rm if}\ \theta=\tb,r=0, \\
 \alpha,\hs {\rm if}\ \theta=\tm,r=1, \\
 0,\hs {\rm otherwise},
 \end{array}\right.
\end{array}
\]
where $\alpha$ is determined such that $\alpha\underline{\pi}=1-\underline{\pi}$ with a given $\underline{\pi}\in (0,1)$.
The numbers $0$ and $1$ mean a nonaggressive action and an aggressive action, respectively.
The utility functions infer that the malicious sender prefers the aggressive action and the nonaggressive action when the receiver chooses the nonaggressive action and the aggressive action, respectively, and the receiver chooses the aggressive reaction when the sender is doubtful.
Specifically, the malicious sender's best responses are given by $a=0$ if $r=1$ and $a=1$ if $r=0$.
The receiver's best response is given by $r=1$ when $\pi_k(\htm)\geq \underline{\pi}$ and otherwise $r=0$.

In Fig.~\ref{fig:num}, the curves represent the malicious sender's estimated belief, two sample paths of the receiver's exact belief, and the empirical mean of the exact belief obtained under twenty trials where the broken line represents the boundary $\underline{\pi}$.
The curves are obtained under $\lambda=0.55$ and $\underline{\pi}=0.85$ with the initial belief $\pi_0(\htm)=0.15$.
Because the sample path of $\hat{\pi}_k(\htm)$ is independent of the elementary event under this setting, only the unique path is drawn.
The receiver's belief in the result is calculated with the true sender's best response.
As observed in the figure, the curve of the estimated belief $\hat{\pi}_k(\htm)$ monotonically increases as the time proceeds in contrast to the exact beliefs $\pi^{(1)}_k(\htm)$ and $\pi^{(2)}_k(\htm)$.
When the estimated belief exceeds $\underline{\pi}$, the malicious sender's best response becomes the benign one and subsequently the estimated belief is kept to be invariant.
Moreover, it can be confirmed that the analytic mean $\hat{\pi}(\tm)$ is close with the empirical mean, represented by the dotted line.
The result numerically indicates the asymptotic security of the control system guaranteed in the theoretical findings.

\begin{figure}[t]
\centering
\includegraphics[width = .98\linewidth]{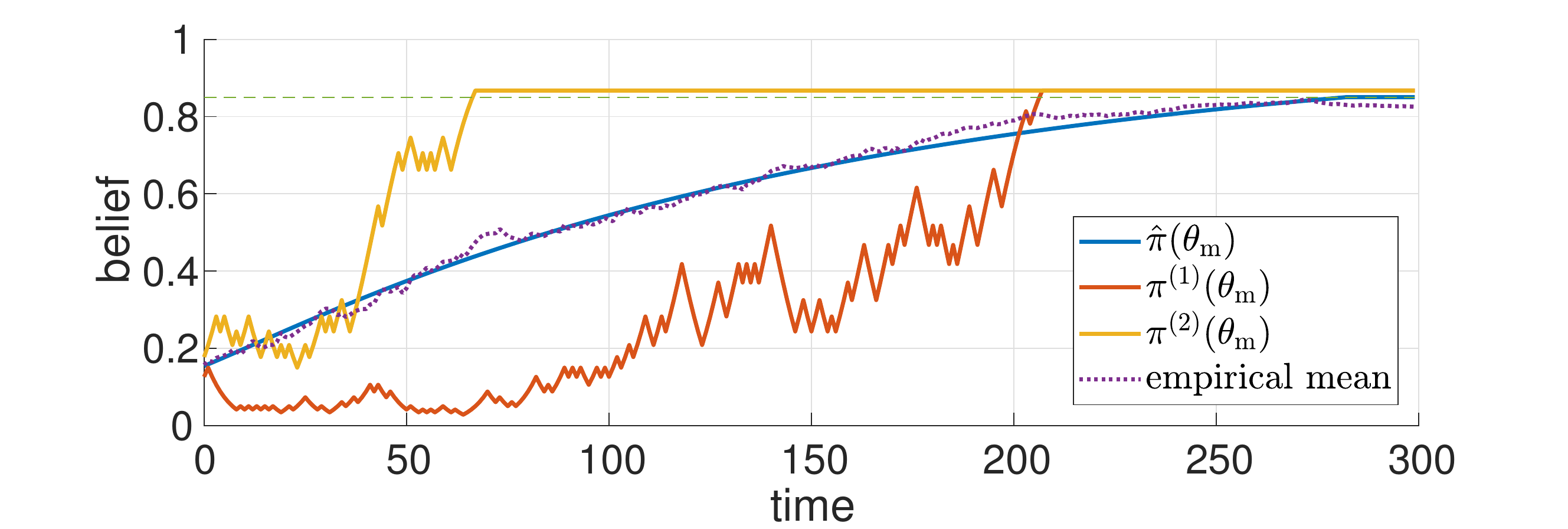}
\caption{The time sequence of the sender's estimated belief $\hat{\pi}_k(\htm)$, two sample paths of the exact receiver's belief $\pi(\htm)$, and the empirical mean.}
\label{fig:num}
\end{figure}

\section{Conclusion}
\label{sec:conc}
In this paper, we have considered a defense mechanism for resilient control systems with reactions.
The behaviors of a malicious attacker and the defender have been analyzed through a repeated signaling game.
It has theoretically been discovered that asymptotic security is achieved by covert reaction.
The result indicates the importance of covertness of reactions for enhancing resilience of control systems.

One important future research direction is to consider cumulative utilities and strategies with memory instead of the immediate utilities used for deciding the players' best responses.
In the current setting, because the sender does not care about future transition of receiver's belief, the malicious sender tends to choose an aggressive action even if it leads to a rapid rise of the belief, which could be unreasonable in practical situations.
Another direction is to consider the case where there is a feedback from the reaction to the system state.
It is expected that the same results hold if we can make the effects secret to the sender by utilizing the internal structure of the control system.
Moreover, analyzing ``transient security'' in the proposed framework is an open problem.
The difficulty arises from computational complexity of best responses for general control systems.
Finally, this paper is built on the premise that the sender uses her own strategy as the strategy estimated by the receiver for truncating the infinite chain of estimations.
Analysis of the case where multiple estimations are used for determining reasonable strategies is included in future work.

\section*{Appendix}
\subsection*{Proof of Lemma~\ref{lem:mono}}
Fix an elementary event and consider corresponding realizations.
All probabilities in the proof are taken with $A_k=\ssk(\tm,I^{\rm s}_k)$.
Since $\{\ssk\}$ is a sender's best response strategy profile, with $\sigszk = \sszk$ we have
$\hat{\pi}_{k+1}(\htm) = \displaystyle{\sum_{i^{\rm r}_{k} \in \mathcal{I}^{\rm r}_{k}} f(\pi_{k},i^{\rm r}_{k},\sszk)(\htm) p(i^{\rm r}_{k}|\tm,i^{\rm s}_{k})}.$
Since $p(\irk|i^{{\rm r}\prime}_{k-1})=0$ if $i^{{\rm r}\prime}_{k-1} \not\subset \irk$,
we have
\[
\begin{array}{cl}
p(i^{\rm r}_{k}|\tm,i^{\rm s}_{k}) \hspace{-1mm} \hs = \displaystyle{\sum_{i^{{\rm r}\prime}_{k-1} \in \mathcal{I}^{{\rm r}\prime}_{k-1}} p(\irk|\tm,\isk,i^{{\rm r}\prime}_{k-1})p(i^{{\rm r}\prime}_{k-1}|\tm,i^{\rm s}_{k})} \\
\hs = p(\irk|\tm,\isk,i^{\rm r}_{k-1})p(i^{\rm r}_{k-1}|\tm,i^{\rm s}_{k})
\end{array}
\]
where $i^{\rm r}_{k-1} \subset \irk$.
Because $\{U_k\}_{k\in\mathbb{Z}_+}$ are mutually independent and $a_{k-1}$ is uniquely determined from $i^{\rm s}_{k-1}$ with the given sender's strategy profile,
it turns out that
$p(i^{\rm r}_{k-1}|\tm,i^{\rm s}_{k}) = p(i^{\rm r}_{k-1}|\tm,i^{\rm s}_{k-1}, u_k, a_{k-1}) = p(i^{\rm r}_{k-1}|\tm,i^{\rm s}_{k-1}),$
which yields
\[
 \begin{array}{cl}
 \hat{\pi}_{k+1}(\htm) \hs = \displaystyle{\sum_{i^{\rm r}_{k} \in \mathcal{I}^{\rm r}_{k}} \dfrac{ p\left(\irk|\htm,i^{\rm r}_{k-1}\right)\pi_k(\htm) }{\displaystyle{\sum_{\theta' \in \hiTheta} p\left(\irk|\theta',i^{\rm r}_{k-1}\right)\pi_k(\theta')}}}\\
 \hs \quad \quad \cdot p(\irk|\tm,\isk,i^{\rm r}_{k-1})p(i^{\rm r}_{k-1}|\tm,i^{\rm s}_{k-1}).
 \end{array}
\]

We will reduce the index variable of the summation.
Because the control system is a deterministic system, $\hat{x}_k$ is uniquely determined once $u_{0:k}$ is fixed.
Hence, from Assumption~\ref{assum:info}, if $u_k,\hat{x}_k \notin \irk$ for $u_k\in\isk$ and $\hat{x}_k=\Sigma_k(u_{0:k},a_{\rm b})$, then $p(\irk|\theta,\isk,i^{\rm r}_{k-1})=0$.
Thus we have
\[
 \begin{array}{l}
 \hat{\pi}_{k+1}(\htm)\\
 = \displaystyle{\sum_{y_k\in\mathcal{Y}} \sum_{i^{\rm r}_{k-1}\in\mathcal{I}^{\rm r}_{k-1}} \dfrac{ p\left( y_k |\htm,i^{\rm r}_{k-1},u_k,\hat{x}_k\right)\pi_k(\htm) }{\displaystyle{\sum_{\theta' \in \hiTheta} p\left(y_k|\theta',i^{\rm r}_{k-1},u_k,\hat{x}_k,\right)\pi_k(\theta')}}}\\
 \quad \cdot p(y_k|\tm,\isk,i^{\rm r}_{k-1})p(i^{\rm r}_{k-1}|\tm,i^{\rm s}_{k-1}).
 \end{array}
\]
Since the estimated state $\hat{x}_k$ is uniquely determined from $i^{\rm r}_{k-1}$,
we have $p\left( y_k |\htm,i^{\rm r}_{k-1},u_k,\hat{x}_k\right) = p(y_k|\tm,\isk,i^{\rm r}_{k-1}),$
which we denote by $p_{\rm m}(y_k,i^{\rm r}_{k-1})$.
With the similar notation $p_{\rm b}(y_k,i^{\rm r}_{k-1}):=p(y_k|\tb,\isk,i^{\rm r}_{k-1})$, we have
\[
 \begin{array}{l}
 \hat{\pi}_{k+1}(\htm)\\
 \hspace{-1mm} = \displaystyle{\sum_{y_k\in\mathcal{Y}} \sum_{i\in\mathcal{I}^{\rm r}_{k-1}} \dfrac{ p_{\rm m}(y_k,i)^2 \pi_k(\htm)p(i|\tm,i^{\rm s}_{k-1}) }{\displaystyle{ p_{\rm b}(y_k,i) \pi_k(\htb)} + p_{\rm m}(y_k,i) \pi_k(\htm)}}
 \end{array}
\]
where $i$ is used instead of $i^{\rm r}_{k-1}$ for simplifying the expression.
The equation can be rewritten as
\begin{equation}\label{eq:Gi}
 \hat{\pi}_{k+1}(\htm) =\hspace{-3mm} \sum_{i\in\mathcal{I}^{\rm r}_{k-1}} G_k(i) \pi_k(\htm)p(i|\tm,i^{\rm s}_{k-1})
\end{equation}
where
$G_k(i) := \sum_{y\in\mathcal{Y}} \dfrac{ p_{\rm m}(y_k,i)^2 }{\displaystyle{ p_{\rm b}(y_k,i) \pi_k(\htb)} + p_{\rm m}(y_k,i) \pi_k(\htm)}.$

We investigate the exact lower bound of $G_k(i)$ for a fixed $i$.
Omitting the notation $i$ and $k$, we let $m_y:=p_{\rm m}(y_k,i)$ and $b_y:=p_{\rm b}(y_k,i)$, which satisfy $\sum_{y\in\mathcal{Y}}m_y=\sum_{y\in\mathcal{Y}}b_y=1$.
Define $g(\alpha,m,b):= \sum_{y\in \mathcal{Y}} \frac{m_y^2}{\alpha m_y + (1-\alpha) b_y}$
where $m$ and $b$ contain all $m_y$ and $b_y$, respectively, and $\alpha \in (0,1)$.
Since
\[
 \dfrac{m_y^2}{\alpha m_y + (1-\alpha) b_y} = \dfrac{m_y}{\alpha}-\dfrac{1-\alpha}{\alpha}\cdot \dfrac{m_y b_y}{\alpha m_y + (1-\alpha)b_y},
\]
we have
\[
 g(\alpha,m,b) = \dfrac{1}{\alpha} \sum_{y\in \mathcal{Y}}m_y - \dfrac{1-\alpha}{\alpha} \cdot \underbrace{\sum_{y\in\mathcal{Y}} \dfrac{m_y b_y}{\alpha m_y + (1-\alpha)b_y}}_{=: h(\alpha,m,b)}.
\]
From the weighted arithmetic mean--geometric mean inequality, it turns out that
\[
 \begin{array}{cl}
 h(\alpha,m,b) \hs \leq \sum_{y\in \mathcal{Y}} \dfrac{m_y b_y}{m_y^{\alpha} b_y^{1-\alpha}} = \sum_{y\in \mathcal{Y}} m_y^{1-\alpha}b_y^{\alpha}\\
  \hs \leq \sum_{y \in \mathcal{Y}} \left\{(1-\alpha)m_y+\alpha b_y\right\} = 1.
 \end{array}
\]
By using this inequality, we have
$g(\alpha,m,b) \geq \frac{1}{\alpha} \sum_{y\in \mathcal{Y}}m_y - \frac{1-\alpha}{\alpha} = 1.$
Since $G_k(i)=g(\pi_k(\tb),m,b)$, we have $G_k(i) \geq 1$, where the equality holds if and only if $m_y=b_y$ for any $y\in\mathcal{Y}$.

From the lower bound of $G_k(i)$, we have
$\hat{\pi}_{k+1}(\htm) \geq \sum_{i\in\mathcal{I}^{\rm r}_{k-1}} \pi_k(\htm)p(i|\tm,i^{\rm s}_{k-1}) = \hat{\pi}_k(\htm)$
where the equality holds if and only if
$p(y_k|\tb,\isk,i^{\rm r}_{k-1})=p(y_k|\tm,\isk,i^{\rm r}_{k-1})$
for any $y_k\in \mathcal{Y},i^{\rm r}_{k-1} \in \mathcal{I}^{\rm r}_{k-1}$.
Since
$p(y_k|\theta,\isk,i^{\rm r}_{k-1})= \lambda_k(y_k|\Sigma_k(u_{0:k},s^{\rm s}_k(\theta,i^{\rm s}_k))),$
the positivity of the mutual information in Assumption~\ref{assum:defense} yields
$\Sigma_k(u_{0:k},s^{\rm s}_k(\tb,i^{\rm s}_k)) = \Sigma_k(u_{0:k},s^{\rm s}_k(\tm,i^{\rm s}_k)).$
Thus, from the input observability in Assumption~\ref{assum:defense},
$s^{\rm s}_k(\tb,i^{\rm s}_k)=s^{\rm s}_k(\tm,i^{\rm s}_k).$
From Proposition~\ref{prop:benign}, the claim holds.
\hfill\QED

\subsection*{Proof of Theorem~\ref{thm:main}}
Fix an elementary event in $\mathcal{E}_{\{\ssk\}}$ and consider corresponding realizations.
Because $\{\hat{\pi}_k(\htm)\}_{k\in\mathbb{Z}_+}$ is bounded above, Lemma~\ref{lem:mono} implies that there exists $\pia \in [0,1]$ such that $\hat{\pi}_k(\htm)\to\pia$ as $k\to \infty$.

We show that there exists $N\in\mathbb{Z}_+$ such that $\hat{\pi}_{k}(\htm)$ is identical for $k>N$ by contradiction.
Assume that $\hat{\pi}_{k+1}(\htm)>\hat{\pi}_{k}(\htm)$ for any $k\in\mathbb{Z}_+$.
Then for any $\epsilon>0$ there exists $N\in \mathbb{Z}_+$ such that $\pia-\epsilon<\hat{\pi}_k(\htm)<\pia$ for any $k>N$.
From the uniform boundedness of the mutual information in Assumption~\ref{assum:defense},
$\limsup_{k\to \infty} \max_{y\in \mathcal{Y}, i\in \mathcal{I}^{\rm r}_{k-1}} |p_{\rm m}(y_k,i)-p_{\rm b}(y_k,i)|>0.$
Because $G_k(i)$, defined in~\eqref{eq:Gi}, is a continuous function with respect to $p_{\rm m}(y_k,i)$ and $p_{\rm b}(y_k,i)$ and $\hat{\pi}^{\ast}_{\rm m}<1$ from Assumption~\ref{assum:prefe}, we have
$\limsup_{k\to \infty} \max_{i\in \mathcal{I}^{\rm r}_{k-1}} G_k(i) >1.$
Then there exists $\epsilon>0$ and $k\in \mathbb{Z}_+$ such that $\hat{\pi}_{k+1}(\htm)>\pia$ holds when $\pia-\epsilon<\hat{\pi}_k(\htm)$, which leads to a contradiction.

Therefore, there exists $N\in\mathbb{Z}_+$ such that $\hat{\pi}_{k}(\htm)$ is identical for $k>N$.
Hence, from Lemma~\ref{lem:mono}, $\ssk(\tm,\isk)=a_{\rm b}$ for $k>N$, which proves the claim.
\hfill\QED

\if0
We first show that $\pia \neq 1$ by contradiction.
Assume that $\pia = 1$.
Because the image of $\hat{\pi}_{k+1}(\htm)$ is in $(0,1)$ if $\hat{\pi}_k(\htm)\in (0,1)$, there does not exist $k\in\mathbb{Z}_+$ such that $\hat{\pi}_k(\htm)=1$.
Hence for any $\epsilon>0$ there exists $N\in \mathbb{Z}_+$ such that $1-\epsilon<\hat{\pi}_k(\htm)<1$ for any $k>N$.
For such $k$, from Assumptions~\ref{assum:benign} and~\ref{assum:discrete}, there exists $\epsilon'\in(0,\epsilon)$ such that
\[
 \BRrk(\ssk,\hat{\pi}_k,\sszk)=\{s^{\rm r}_{\rm m}\}
\]
regardless of $\sszk$ when $\hat{\pi}_k(\htm)>1-\epsilon'$ where $s^{\rm r}_{\rm m}(\irk)=r_{\rm m}$ for any $\irk \in \cIrk$.
From Assumption~\ref{assum:benign} again, we have
\[
 {\rm BR}^{\rm s}_k(\tm,s^{\rm r}_{\rm m}) = \{s_{\rm b}\}
\]
where $s_{\rm b}(\tm,\isk)=a_{\rm b}$ for any $\isk \in \cIsk$.
Thus $\ssk = s_{\rm b}$.
Then from Lemma~\ref{lem:mono} it turns out that $\hat{\pi}_{k+1}(\htm)=\hat{\pi}_k(\htm)$.
By repeating this procedure, we can show that $\hat{\pi}_{k}(\htm)$ is identical for $k>N$.
However, since $\hat{\pi}_{k}(\htm)\neq 1$, this is a contradiction.
Therefore, $\pia \neq 1$.

Next, we show that there exists $N\in\mathbb{Z}_+$ such that $\hat{\pi}_{k}(\htm)$ is identical for $k>N$ by contradiction.
Assume that $\hat{\pi}_{k+1}(\htm)>\hat{\pi}_{k}(\htm)$ for any $k\in\mathbb{Z}_+$.
Then for any $\epsilon>0$ there exists $N\in \mathbb{Z}_+$ such that $\pia-\epsilon<\hat{\pi}_k(\htm)<\pia$ for any $k>N$.
Now we assume that $G_k(i)$ has a lower bound $\gamma>1$ for any $i \in \{i\in \mathcal{I}^{\rm r}_{k-1}|G_k(i)\neq 1\}$ and $k\in \mathbb{Z}_+$, where $G_k$ is the function defined in~\eqref{eq:Gi}.
Then there exists $\epsilon>0$ and $k\in \mathbb{Z}_+$ such that $\hat{\pi}_{k+1}(\htm)>\pia$ holds when $\pia-\epsilon<\hat{\pi}_k(\htm)$, which leads to a contradiction.
Therefore it suffices to show that there exists $G_k(i)$ has a lower bound $\gamma>1$ for completing the proof.

As shown in the proof of Lemma~\ref{lem:mono}, $G_k(i)$ is a monotonically decreasing function as $\pi_k(\htm)$ increases regardless of $i$ and $k$.
From the third statement in Assumption~\ref{assum:prob}, there exists $\overline{\pi}<1$ such that $\pi_k(\htm)\leq\overline{\pi}$ for any $k\in \mathbb{Z}_+$.
Thus it suffices to show that there exists a lower bound $\gamma>1$ for $G_k(i)$ under $\pi_k(\htm)=\overline{\pi}$.
\fi







\bibliographystyle{IEEEtran}
\bibliography{sshrrefs}


\end{document}